\documentclass[english]{article}
\usepackage{lmodern}

\usepackage[T1]{fontenc}
\usepackage[latin9]{inputenc}
\usepackage{geometry}
\usepackage{units}
\usepackage{dsfont}
\usepackage{amsmath}
\usepackage{amsthm}
\usepackage{amssymb}

\makeatletter
\numberwithin{figure}{section}
\numberwithin{equation}{section}
\theoremstyle{plain}
\newtheorem{thm}{\protect\theoremname}[section]
\theoremstyle{plain}
\newtheorem{cor}[thm]{\protect\corollaryname}
\theoremstyle{plain}
\newtheorem{lem}[thm]{\protect\lemmaname}
\theoremstyle{plain}
\newtheorem{prop}[thm]{\protect\propositionname}
\theoremstyle{remark}
\newtheorem*{acknowledgement*}{\protect\acknowledgementname}

\usepackage{float,psfrag,epsfig,color,url,hyperref}
\usepackage{algorithm,algorithmic}
\usepackage{graphicx,relsize}
\usepackage{amssymb,amsfonts,amsmath,amsthm,amscd,dsfont,mathrsfs,mathtools,microtype,nicefrac,pifont}
\usepackage{upgreek}
\usepackage[dvipsnames]{xcolor}
\usepackage{epstopdf,bbm,enumitem}
\usepackage{dsfont,tikz}
\usepackage[mathscr]{euscript}
\usepackage[toc,page]{appendix}
\hypersetup{
  colorlinks,
  linkcolor={red!50!black},
  citecolor={blue!50!black},
  urlcolor={blue!80!black}
}
\usepackage{etoolbox}
\patchcmd{\thebibliography}
  {\settowidth}
  {\setlength{\itemsep}{0pt plus 0.1pt}\settowidth}
  {}{}
\apptocmd{\thebibliography}
  {\small}
  {}{}
\allowdisplaybreaks
\usepackage{custom}
\makeatletter
\renewcommand{\paragraph}{%
  \@startsection{paragraph}{4}%
  {\z@}{1.25ex \@plus 1ex \@minus .2ex}{-1em}%
  {\normalfont\normalsize\bfseries}%
}
\makeatother

\makeatother

\usepackage{babel}
\providecommand{\acknowledgementname}{Acknowledgement}
\providecommand{\corollaryname}{Corollary}
\providecommand{\lemmaname}{Lemma}
\providecommand{\propositionname}{Proposition}
\providecommand{\theoremname}{Theorem}

\begin{document}
\def\balign#1\ealign{\begin{align}#1\end{align}}
\def\baligns#1\ealigns{\begin{align*}#1\end{align*}}
\def\balignat#1\ealign{\begin{alignat}#1\end{alignat}}
\def\balignats#1\ealigns{\begin{alignat*}#1\end{alignat*}}
\def\bitemize#1\eitemize{\begin{itemize}#1\end{itemize}}
\def\benumerate#1\eenumerate{\begin{enumerate}#1\end{enumerate}}

\newenvironment{talign*}
 {\let\displaystyle\textstyle\csname align*\endcsname}
 {\endalign}
\newenvironment{talign}
 {\let\displaystyle\textstyle\csname align\endcsname}
 {\endalign}

\def\balignst#1\ealignst{\begin{talign*}#1\end{talign*}}
\def\balignt#1\ealignt{\begin{talign}#1\end{talign}}

\let\originalleft\left
\let\originalright\right
\renewcommand{\left}{\mathopen{}\mathclose\bgroup\originalleft}
\renewcommand{\right}{\aftergroup\egroup\originalright}

\def\Gronwall{Gr\"onwall\xspace}
\def\Holder{H\"older\xspace}
\def\Ito{It\^o\xspace}
\def\Nystrom{Nystr\"om\xspace}
\def\Schatten{Sch\"atten\xspace}
\def\Matern{Mat\'ern\xspace}

\def\tinycitep*#1{{\tiny\citep*{#1}}}
\def\tinycitealt*#1{{\tiny\citealt*{#1}}}
\def\tinycite*#1{{\tiny\cite*{#1}}}
\def\smallcitep*#1{{\scriptsize\citep*{#1}}}
\def\smallcitealt*#1{{\scriptsize\citealt*{#1}}}
\def\smallcite*#1{{\scriptsize\cite*{#1}}}

\def\blue#1{\textcolor{blue}{{#1}}}
\def\green#1{\textcolor{green}{{#1}}}
\def\orange#1{\textcolor{orange}{{#1}}}
\def\purple#1{\textcolor{purple}{{#1}}}
\def\red#1{\textcolor{red}{{#1}}}
\def\teal#1{\textcolor{teal}{{#1}}}

\def\mbi#1{\boldsymbol{#1}} 
\def\mbf#1{\mathbf{#1}}
\def\mrm#1{\mathrm{#1}}
\def\tbf#1{\textbf{#1}}
\def\tsc#1{\textsc{#1}}

\def\mbiA{\mbi{A}}
\def\mbiB{\mbi{B}}
\def\mbiC{\mbi{C}}
\def\mbiDelta{\mbi{\Delta}}
\def\mbif{\mbi{f}}
\def\mbiF{\mbi{F}}
\def\mbih{\mbi{g}}
\def\mbiG{\mbi{G}}
\def\mbih{\mbi{h}}
\def\mbiH{\mbi{H}}
\def\mbiI{\mbi{I}}
\def\mbim{\mbi{m}}
\def\mbiP{\mbi{P}}
\def\mbiQ{\mbi{Q}}
\def\mbiR{\mbi{R}}
\def\mbiv{\mbi{v}}
\def\mbiV{\mbi{V}}
\def\mbiW{\mbi{W}}
\def\mbiX{\mbi{X}}
\def\mbiY{\mbi{Y}}
\def\mbiZ{\mbi{Z}}

\def\textsum{{\textstyle\sum}} 
\def\textprod{{\textstyle\prod}} 
\def\textbigcap{{\textstyle\bigcap}} 
\def\textbigcup{{\textstyle\bigcup}} 

\def\reals{\mathbb{R}} 
\def\integers{\mathbb{Z}} 
\def\rationals{\mathbb{Q}} 
\def\naturals{\mathbb{N}} 
\def\complex{\mathbb{C}} 

\def\what#1{\widehat{#1}}

\def\twovec#1#2{\left[\begin{array}{c}{#1} \\ {#2}\end{array}\right]}
\def\threevec#1#2#3{\left[\begin{array}{c}{#1} \\ {#2} \\ {#3} \end{array}\right]}
\def\nvec#1#2#3{\left[\begin{array}{c}{#1} \\ {#2} \\ \vdots \\ {#3}\end{array}\right]} 

\def\maxeig#1{\lambda_{\mathrm{max}}\left({#1}\right)}
\def\mineig#1{\lambda_{\mathrm{min}}\left({#1}\right)}

\def\Re{\operatorname{Re}} 
\def\indic#1{\mbb{I}\left[{#1}\right]} 
\def\logarg#1{\log\left({#1}\right)} 
\def\polylog{\operatorname{polylog}}
\def\maxarg#1{\max\left({#1}\right)} 
\def\minarg#1{\min\left({#1}\right)} 
\def\Earg#1{\E\left[{#1}\right]}
\def\Esub#1{\E_{#1}}
\def\Esubarg#1#2{\E_{#1}\left[{#2}\right]}
\def\bigO#1{\mathcal{O}\left(#1\right)} 
\def\littleO#1{o(#1)} 
\def\P{\mbb{P}} 
\def\Parg#1{\P\left({#1}\right)}
\def\Psubarg#1#2{\P_{#1}\left[{#2}\right]}
\def\Trarg#1{\Tr\left[{#1}\right]} 
\def\trarg#1{\tr\left[{#1}\right]} 
\def\Cov{\mrm{Cov}} 
\def\Covarg#1{\Cov\left[{#1}\right]}
\def\Covsubarg#1#2{\Cov_{#1}\left[{#2}\right]}
\def\Corr{\mrm{Corr}} 
\def\Corrarg#1{\Corr\left[{#1}\right]}
\def\Corrsubarg#1#2{\Corr_{#1}\left[{#2}\right]}
\newcommand{\info}[3][{}]{\mathbb{I}_{#1}\left({#2};{#3}\right)} 
\newcommand{\staticexp}[1]{\operatorname{exp}(#1)} 
\newcommand{\loglihood}[0]{\mathcal{L}} 


\providecommand{\arccos}{\mathop\mathrm{arccos}}
\providecommand{\dom}{\mathop\mathrm{dom}}
\providecommand{\diag}{\mathop\mathrm{diag}}
\providecommand{\tr}{\mathop\mathrm{tr}}
\providecommand{\card}{\mathop\mathrm{card}}
\providecommand{\sign}{\mathop\mathrm{sign}}
\providecommand{\conv}{\mathop\mathrm{conv}} 
\def\rank#1{\mathrm{rank}({#1})}
\def\supp#1{\mathrm{supp}({#1})}

\providecommand{\minimize}{\mathop\mathrm{minimize}}
\providecommand{\maximize}{\mathop\mathrm{maximize}}
\providecommand{\subjectto}{\mathop\mathrm{subject\;to}}

\def\openright#1#2{\left[{#1}, {#2}\right)}

\ifdefined\nonewproofenvironments\else
\ifdefined\ispres\else
 
\fi
\fi
\makeatletter
\@addtoreset{equation}{section}
\makeatother
\def\theequation{\thesection.\arabic{equation}}

\newcommand{\cmark}{\ding{51}}

\newcommand{\xmark}{\ding{55}}

\newcommand{\eq}[1]{\begin{align}#1\end{align}}
\newcommand{\eqn}[1]{\begin{align*}#1\end{align*}}
\renewcommand{\Pr}[1]{\mathbb{P}\left( #1 \right)}
\newcommand{\Ex}[1]{\mathbb{E}\left[#1\right]}

\newcommand{\matt}[1]{{\textcolor{Maroon}{[Matt: #1]}}}
\newcommand{\kook}[1]{{\textcolor{blue}{[Kook: #1]}}}
\definecolor{OliveGreen}{rgb}{0,0.6,0}
\newcommand{\sv}[1]{{\textcolor{OliveGreen}{[Santosh: #1]}}}

\global\long\def\on#1{\operatorname{#1}}%

\global\long\def\bw{\mathsf{Ball\ walk}}%
\global\long\def\sw{\mathsf{Speedy\ walk}}%
\global\long\def\gw{\mathsf{Gaussian\ walk}}%
\global\long\def\ps{\mathsf{Proximal\ sampler}}%
\global\long\def\dw{\mathsf{Dikin\ walk}}%

\global\long\def\chr{\mathsf{Coordinate\ Hit\text{-}and\text{-}Run}}%
\global\long\def\har{\mathsf{Hit\text{-}and\text{-}Run}}%
\global\long\def\gc{\mathsf{Gaussian\ cooling}}%
\global\long\def\ino{\mathsf{\mathsf{In\text{-}and\text{-}Out}}}%
\global\long\def\tgc{\mathsf{Tilted\ Gaussian\ cooling}}%
\global\long\def\PS{\mathsf{PS}}%
\global\long\def\psunif{\mathsf{PS}_{\textup{unif}}}%
\global\long\def\psexp{\mathsf{PS}_{\textup{exp}}}%
\global\long\def\psann{\mathsf{PS}_{\textup{ann}}}%
\global\long\def\psgauss{\mathsf{PS}_{\textup{Gauss}}}%
\global\long\def\eval{\mathsf{Eval}}%
\global\long\def\mem{\mathsf{Mem}}%

\global\long\def\O{O}%
\global\long\def\Otilde{\widetilde{O}}%
\global\long\def\Omtilde{\widetilde{\Omega}}%

\global\long\def\E{\mathbb{E}}%
\global\long\def\Z{\mathbb{Z}}%
\global\long\def\P{\mathbb{P}}%
\global\long\def\N{\mathbb{N}}%

\global\long\def\R{\mathbb{R}}%
\global\long\def\Rd{\mathbb{R}^{d}}%
\global\long\def\Rdd{\mathbb{R}^{d\times d}}%
\global\long\def\Rn{\mathbb{R}^{n}}%
\global\long\def\Rnn{\mathbb{R}^{n\times n}}%

\global\long\def\psd{\mathbb{S}_{+}^{d}}%
\global\long\def\pd{\mathbb{S}_{++}^{d}}%

\global\long\def\defeq{\stackrel{\mathrm{{\scriptscriptstyle def}}}{=}}%

\global\long\def\veps{\varepsilon}%
\global\long\def\lda{\lambda}%
\global\long\def\vphi{\varphi}%
\global\long\def\K{\mathcal{K}}%

\global\long\def\half{\frac{1}{2}}%
\global\long\def\nhalf{\nicefrac{1}{2}}%
\global\long\def\texthalf{{\textstyle \frac{1}{2}}}%
\global\long\def\ltwo{L^{2}}%

\global\long\def\ind{\mathds{1}}%
\global\long\def\op{\mathsf{op}}%
\global\long\def\ch{\mathsf{Ch}}%
\global\long\def\kls{\mathsf{KLS}}%
\global\long\def\ts{\mathsf{Ts}}%
\global\long\def\hs{\textup{HS}}%
\global\long\def\ls{\textup{LS}}%
\global\long\def\frob{\textrm{F}}%

\global\long\def\cpi{C_{\mathsf{PI}}}%
\global\long\def\clsi{C_{\mathsf{LSI}}}%
\global\long\def\cch{C_{\mathsf{Ch}}}%
\global\long\def\clch{C_{\mathsf{logCh}}}%
\global\long\def\cexp{C_{\mathsf{exp}}}%
\global\long\def\cgauss{C_{\mathsf{Gauss}}}%

\global\long\def\chooses#1#2{_{#1}C_{#2}}%

\global\long\def\vol{\on{vol}}%

\global\long\def\sym{\on{sym}}%

\global\long\def\law{\on{law}}%

\global\long\def\tr{\on{tr}}%

\global\long\def\diag{\on{diag}}%

\global\long\def\diam{\on{diam}}%

\global\long\def\poly{\on{poly}}%

\global\long\def\polylog{\on{polylog}}%

\global\long\def\Diag{\on{Diag}}%

\global\long\def\inter{\on{int}}%

\global\long\def\esssup{\on{ess\,sup}}%

\global\long\def\proj{\on{Proj}}%

\global\long\def\e{\mathrm{e}}%

\global\long\def\id{\mathrm{id}}%

\global\long\def\supp{\on{supp}}%

\global\long\def\spanning{\on{span}}%

\global\long\def\rows{\on{row}}%

\global\long\def\cols{\on{col}}%

\global\long\def\rank{\on{rank}}%

\global\long\def\T{\mathsf{T}}%

\global\long\def\bs#1{\boldsymbol{#1}}%

\global\long\def\eu#1{\EuScript{#1}}%

\global\long\def\mb#1{\mathbf{#1}}%

\global\long\def\mbb#1{\mathbb{#1}}%

\global\long\def\mc#1{\mathcal{#1}}%

\global\long\def\mf#1{\mathfrak{#1}}%

\global\long\def\ms#1{\mathscr{#1}}%

\global\long\def\mss#1{\mathsf{#1}}%

\global\long\def\msf#1{\mathsf{#1}}%

\global\long\def\textint{{\textstyle \int}}%
\global\long\def\Dd{\mathrm{D}}%
\global\long\def\D{\mathrm{d}}%
\global\long\def\grad{\nabla}%
 
\global\long\def\hess{\nabla^{2}}%
 
\global\long\def\lapl{\triangle}%
 
\global\long\def\deriv#1#2{\frac{\D#1}{\D#2}}%
 
\global\long\def\pderiv#1#2{\frac{\partial#1}{\partial#2}}%
 
\global\long\def\de{\partial}%
\global\long\def\lagrange{\mathcal{L}}%
\global\long\def\Div{\on{div}}%

\global\long\def\Gsn{\mathcal{N}}%
 
\global\long\def\BeP{\textnormal{BeP}}%
 
\global\long\def\Ber{\textnormal{Ber}}%
 
\global\long\def\Bern{\textnormal{Bern}}%
 
\global\long\def\Bet{\textnormal{Beta}}%
 
\global\long\def\Beta{\textnormal{Beta}}%
 
\global\long\def\Bin{\textnormal{Bin}}%
 
\global\long\def\BP{\textnormal{BP}}%
 
\global\long\def\Dir{\textnormal{Dir}}%
 
\global\long\def\DP{\textnormal{DP}}%
 
\global\long\def\Exp{\textnormal{Exp}}%
 
\global\long\def\Gam{\textnormal{Gamma}}%
 
\global\long\def\GEM{\textnormal{GEM}}%
 
\global\long\def\HypGeo{\textnormal{HypGeo}}%
 
\global\long\def\Mult{\textnormal{Mult}}%
 
\global\long\def\NegMult{\textnormal{NegMult}}%
 
\global\long\def\Poi{\textnormal{Poi}}%
 
\global\long\def\Pois{\textnormal{Pois}}%
 
\global\long\def\Unif{\textnormal{Unif}}%

\global\long\def\bpar#1{\bigl(#1\bigr)}%
\global\long\def\Bpar#1{\Bigl(#1\Bigr)}%

\global\long\def\abs#1{|#1|}%
\global\long\def\babs#1{\bigl|#1\bigr|}%
\global\long\def\Babs#1{\Bigl|#1\Bigr|}%

\global\long\def\snorm#1{\|#1\|}%
\global\long\def\bnorm#1{\bigl\Vert#1\bigr\Vert}%
\global\long\def\Bnorm#1{\Bigl\Vert#1\Bigr\Vert}%

\global\long\def\sbrack#1{[#1]}%
\global\long\def\bbrack#1{\bigl[#1\bigr]}%
\global\long\def\Bbrack#1{\Bigl[#1\Bigr]}%

\global\long\def\sbrace#1{\{#1\}}%
\global\long\def\bbrace#1{\bigl\{#1\bigr\}}%
\global\long\def\Bbrace#1{\Bigl\{#1\Bigr\}}%

\global\long\def\Abs#1{\left\lvert #1\right\rvert }%
\global\long\def\Par#1{\left(#1\right)}%
\global\long\def\Brack#1{\left[#1\right]}%
\global\long\def\Brace#1{\left\{  #1\right\}  }%

\global\long\def\inner#1{\langle#1\rangle}%
 
\global\long\def\binner#1#2{\left\langle {#1},{#2}\right\rangle }%

\global\long\def\norm#1{\lVert#1\rVert}%
\global\long\def\onenorm#1{\norm{#1}_{1}}%
\global\long\def\twonorm#1{\norm{#1}_{2}}%
\global\long\def\infnorm#1{\norm{#1}_{\infty}}%
\global\long\def\fronorm#1{\norm{#1}_{\text{F}}}%
\global\long\def\nucnorm#1{\norm{#1}_{*}}%
\global\long\def\staticnorm#1{\|#1\|}%
\global\long\def\statictwonorm#1{\staticnorm{#1}_{2}}%

\global\long\def\mmid{\mathbin{\|}}%

\global\long\def\otilde#1{\widetilde{O}(#1)}%
\global\long\def\wtilde{\widetilde{W}}%
\global\long\def\wt#1{\widetilde{#1}}%

\global\long\def\KL{\msf{KL}}%
\global\long\def\dtv{d_{\textrm{\textup{TV}}}}%
\global\long\def\FI{\msf{FI}}%
\global\long\def\tv{\msf{TV}}%
\global\long\def\TV{\msf{TV}}%

\global\long\def\cov{\on{cov}}%
\global\long\def\var{\on{Var}}%
\global\long\def\ent{\on{Ent}}%

\global\long\def\cred#1{\textcolor{red}{#1}}%
\global\long\def\cblue#1{\textcolor{blue}{#1}}%
\global\long\def\cgreen#1{\textcolor{green}{#1}}%
\global\long\def\ccyan#1{\textcolor{cyan}{#1}}%

\global\long\def\iff{\Leftrightarrow}%
 
\global\long\def\textfrac#1#2{{\textstyle \frac{#1}{#2}}}%

\title{A Unified Complexity Bound for Logconcave Sampling\date{}\author{Yunbum Kook\\ Georgia Tech\\  \texttt{yb.kook@gatech.edu} \and Santosh S. Vempala\\ Georgia Tech\\ \texttt{vempala@gatech.edu}}}
\maketitle
\begin{abstract}
We give a simple, unified, and nearly tight bound for sampling arbitrary
logconcave distributions from a warm start using the $\ino$ algorithm
along with exponential lifting. The main new ingredient in the analysis
is an improved bound on the Poincar\'e constant of a lifted distribution.
As a consequence, the resulting convergence rate is nearly tight for
both constrained settings (e.g., Gaussian restricted to a convex body)
and well-conditioned settings (e.g., strongly logconcave and smooth
densities). 
\end{abstract}

\section{Introduction}

Let $\pi^{X}\propto e^{-V}$ be a full-dimensional arbitrary logconcave
probability measure on $\Rd$, where $V:\Rd\to\R\cup\{\infty\}$ is
a convex function. Without loss of generality, we assume that $V$
is lower semi-continuous (see Preliminaries below). By scaling and
translating the domain, we can assume that:
\begin{equation}
B(0,1)\subset\msf L_{g}:=\sbrace{x\in\Rd:V(x)-\inf V\leq10d}\,.\tag{ground-set}\label{eq:intro-ground-set}
\end{equation}
When $\pi^{X}$ is uniform on a convex body $\K$, this just means
$B(0,1)\subset\K$ and for general logconcave densities, it roughly
says that its \emph{effective support}\footnote{The ground set $\msf L_{g}$ takes up most of $\pi^{X}$-measure \cite[Lemma 5.16]{LV07geometry}.}
contains a large ball. The main question for this paper is the following: 

\emph{Given access to an evaluation oracle for $V$ (i.e., access
to $V(x)$ for query $x$), how many queries do we need to generate
a sample whose law is close enough to $\pi^{X}$?}

We provide the following bound in terms of the R\'enyi divergence
of the output distribution (see \eqref{eq:renyi-definition}). 
\begin{thm}
[Complexity of zeroth-order logconcave sampling]\label{thm:INO-warm}
For a convex function $V:\Rd\to\R\cup\{\infty\}$ presented by an
evaluation oracle, let $\pi\propto e^{-V}$ be the logconcave distribution
over $\Rd$ with $B(0,1)\subset\msf L_{g}$ and $\Lambda=\norm{\cov\pi}_{\op}$.
Given $\veps>0$, an initial distribution $\pi_{0}$, and $q\geq2\vee\Omtilde(\log(d^{2}\Lambda\log\frac{1}{\veps}))$
such that $M_{q}=\norm{\frac{\D\pi_{0}}{\D\pi}}_{L^{q}(\pi)}\leq10$,
there exists an algorithm that returns $X^{*}$ satisfying $\eu R_{q}(\law X^{*}\mmid\pi)\leq\veps$,
using $\Otilde(qd^{2}\Lambda\log^{3}\frac{1}{\veps})$\emph{ }\textup{\emph{evaluation
queries in expectation}}\emph{. }If $M_{\infty}=\norm{\frac{\D\pi_{0}}{\D\pi}}_{L^{\infty}(\pi)}\leq10$
and $R^{2}:=\E_{\pi}[\norm{\cdot}^{2}]$, then there exists an algorithm
that returns $X^{*}$ satisfying $\eu R_{\infty}(\law X^{*}\mmid\pi)\leq\veps$,
using $\Otilde(d^{2}R\Lambda^{1/2}\polylog\frac{1}{\veps})$\emph{
}\textup{\emph{evaluation queries in expectation.}}
\end{thm}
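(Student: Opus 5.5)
The plan is to reduce sampling $\pi^{X}\propto e^{-V}$ to uniform sampling over a convex body in one extra dimension via exponential lifting. Then run $\ino$ (the proximal sampler specialized to uniform targets) on the lifted body. Finally, bound its mixing through a Poincar\'e constant for the lifted measure.

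\textbf{Lifting.} Define the epigraph-type set $\K=\{(x,t)\in\R^{d+1}:V(x)\le t\}$ and the lifted measure $\pi(\D x,\D t)\propto e^{-t}\ind_{\K}(x,t)\,\D x\D t$. Its $X$-marginal is exactly $\pi^{X}$, since $\int_{V(x)}^{\infty}e^{-t}\D t=e^{-V(x)}$. Because $t$ is exponentially distributed above $V(x)$, the level $t$ is concentrated within $\Otilde(d)$ of $\inf V$, so we may truncate to $t\le\inf V+O(d\log\frac{1}{\veps})$ at negligible cost.

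To make the target uniform, I would either treat $e^{-t}$ as a linear tilt, or (cleaner) run $\ino$ on the exponential-tilted uniform distribution. Each step draws $y\sim\Gsn(z,hI)$ and then samples $z$ from $\Gsn(y,hI)$ restricted to $\K$ and tilted by $e^{-t}$, which is done by rejection sampling. Membership in $\K$ needs one evaluation of $V$.

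A warm start for $\pi^{X}$ lifts to a warm start for $\pi$: set $t=V(x)+\Exp(1)$, and the $L^{q}$ ratio is preserved.

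\textbf{Per-step cost and step size.} Using the ground-set condition $B(0,1)\subset\msf L_{g}$, the lifted body contains a ball-like region of radius $\Omega(1)$ in $x$ and a slab of width $\Theta(d)$ in $t$. Following the standard $\ino$ analysis, with step size $h\asymp1/(d^{2}\polylog)$ the expected number of rejection trials per iteration is $O(1)$, up to a $\log$ failure-probability factor controlled by the warmness $M_{q}$. This is where $q\gtrsim\log(d^{2}\Lambda\log\frac{1}{\veps})$ enters: it is needed to union-bound failures over all iterations.

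\textbf{Mixing.} By the proximal-sampler contraction for R\'enyi divergence under a Poincar\'e inequality, $\eu R_{q}$ decays so that $\Otilde(q\,\cpi(\pi)/h\cdot\log\frac{1}{\veps})$ iterations suffice, with further log factors from the R\'enyi-to-$\chi^{2}$ conversion.

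The \textbf{main obstacle} is the new bound $\cpi(\pi)\lesssim\Lambda=\norm{\cov\pi^{X}}_{\op}$, up to polylog factors, for the lifted measure. Naively the $t$-direction has variance $\Theta(d)$, which would lose a factor of $d$. My first attempt is to decompose $f(x,t)$ via the conditional-variance formula: $\var f=\E\var(f\mid X)+\var\E(f\mid X)$.
\begin{itemize}
\item The conditional law of $t$ given $x$ is $V(x)+\Exp(1)$, which has Poincar\'e constant $O(1)$. This controls the first term by $\E|\partial_{t}f|^{2}$.
\item For the second term, $g(x)=\E[f\mid X=x]$, and the Poincar\'e inequality for the logconcave $\pi^{X}$ gives $\cpi(\pi^{X})\lesssim\Lambda\log d$ (by the KLS bounds of Klartag) times $\E|\nabla g|^{2}$.
\item Then bound $\nabla g$. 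Writing $g(x)=\int_{0}^{\infty}f(x,V(x)+s)e^{-s}\D s$, the term involving $\nabla V$ must be handled with an integration by parts in $s$ that converts it into $\partial_{t}f$ weighted by $|\nabla V|^{2}$. This is exactly what is problematic, since $\E|\nabla V|^{2}$ can be as large as $d$.
\end{itemize}
If this fails, I would instead prove the Poincar\'e inequality directly on the lifted $(d+1)$-dimensional logconcave measure, bounding $\norm{\cov\pi}_{\op}$ restricted to directions mixing $x$ and $t$. The $t$-variance is $O(d)$ only along a single direction, so a refined KLS-type argument (e.g., stochastic localization in which the one bad direction contributes only $O(\log)$) should still give $\Otilde(\Lambda\vee1)$.

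Combining $\cpi\lesssim\Lambda$, $h^{-1}\asymp d^{2}$, and $O(1)$ queries per step gives $\Otilde(qd^{2}\Lambda\log^{3}\frac{1}{\veps})$ queries.

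For the $\eu R_{\infty}$ claim, I would instead use an $L^{\infty}$ (or total-variation plus warmness) argument. Here the conductance bound with the diameter-type quantity $R$ replaces $\Lambda$ by $R\Lambda^{1/2}$, via the $\log$-Cheeger constant.
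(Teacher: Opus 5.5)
There is a genuine gap at exactly the step you flag as the main obstacle, and for your lift it cannot be closed. With $\pi\propto e^{-t}\ind[V(x)\le t]$, the bound $\cpi(\pi)\lesssim\Lambda$ is false, even up to polylogarithmic factors. Since $t\overset{d}{=}V(X)+E$ with $E\sim\Exp(1)$ independent, testing the Poincar\'e inequality with $f(x,t)=t$ gives $\cpi(\pi)\ge\var_{\pi}(t)=\var_{\pi^X}V(X)+1$. Two examples satisfying the ground-set condition show the failure:
\begin{itemize}
\item For $V(x)=\sum_i\abs{x_i}$, where $\Lambda=2$, this lower bound is $d+1$.
\item For $\Gsn(0,\frac1d I_d)$, where $\Lambda=\frac1d$, it is $\frac d2+1$.
\end{itemize}
So neither your conditional-variance decomposition nor any refined KLS or stochastic-localization argument can help. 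The $t$-direction is a real obstruction, and with $h\approx d^{-2}$ the rate $h^{-1}\cpi(\pi)$ is $\gtrsim d^{3}$. Your fallback target $\Otilde(\Lambda\vee1)$ is also exactly the earlier bound whose ``$\vee1$'' this theorem removes; in the Gaussian example it loses a factor $d$. The same obstruction blocks the $R\Lambda^{1/2}$ bound in your $\eu R_\infty$ sketch: the $t$-spread is of order $\sqrt d$ even when $R\lesssim1$.

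The paper supplies three ideas you are missing.

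(i) Rescale the lifted coordinate to $\pi^Z\propto e^{-dt}\ind[V(x)\le dt]$. Then $T\overset{d}{=}(V(X)+E)/d$ and $\var T=(\var_{\pi^X}V(X)+1)/d^2$.

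(ii) Apply the sharp varentropy bound $\var_{\pi^X}V(X)\le d$ for logconcave densities (Nguyen, Wang; Fradelizi--Madiman--Wang). This gives $\var T\le 2/d$. Cauchy--Schwarz on $\var(u^{\T}X+sT)$ then gives $\norm{\cov\pi^Z}\le\Lambda+\frac2d$.

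(iii) Show $\Lambda\gtrsim\frac1d$ from the ground-set condition. That condition forces $\norm{f}_\infty\le e^{10d}/\vol(B_2^d)\le(C\sqrt d)^d$. The isotropic image of $\pi^X$ has maximal density $\sqrt{\det\Sigma}\,\norm f_\infty$, which is at least $(2\pi e)^{-d/2}$ by Gaussian maximum entropy. Hence $\Lambda\ge(\det\Sigma)^{1/d}\gtrsim\frac1d$, and also $\tr\Sigma\gtrsim1$.

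With these, Klartag's bound on the $(d+1)$-dimensional logconcave $\pi^Z$ gives $\cpi(\pi^Z)\lesssim\Lambda\log d$. This is substituted into the existing restart, rejection-sampling and $\eu R_q$-contraction analysis of $\ino$ with $h\approx d^{-2}$.

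For the $\eu R_\infty$ claim, the paper truncates the scaled lift to diameter $O((R\vee1)\log\frac1\veps)$. It then uses $\clsi\lesssim(\text{diameter})\cdot\cpi^{1/2}$ for bounded logconcave measures, together with $\norm{\cov\pi^Z_{\mathrm{tr}}}\le2\norm{\cov\pi^Z}$. Finally, $R^2\ge\tr\Sigma\gtrsim1$ lets it drop the ``$\vee1$''.
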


This result allows us to recover $\kappa d$-complexity for zeroth-order
sampling in the well-conditioned setting (i.e., $0\prec\alpha I_{d}\preceq\hess V\preceq\beta I_{d}$
on $\Rd$) from a warm start \cite{CCSW22improved,ALP+24explicit,ALZ24entropy}
using the same algorithm and analysis and thus can be viewed as a
first step toward unified complexity results for sampling in the general
setting and the well-conditioned setting. Below we use $a\vee b$
to denote $\max\{a,b\}$.
\begin{cor}
[Complexity for well-conditioned distributions]\label{cor:well-conditioned}
Consider $\alpha$-strongly logconcave and $\beta$-log smooth $\pi\propto e^{-V}$
presented by an evaluation oracle. Given $\veps>0$, an initial distribution
$\pi_{0}$, and $q\geq2\vee\Omtilde(\log(\kappa d\log\frac{1}{\veps}))$
such that $M_{q}=\norm{\frac{\D\pi_{0}}{\D\pi}}_{L^{q}(\pi)}\leq10$,
there exists an algorithm that returns $X^{*}$ satisfying $\eu R_{q}(\law X^{*}\mmid\pi)\leq\veps$,
using $\Otilde(q\kappa d\log^{3}\frac{1}{\veps})$\emph{ }\textup{\emph{evaluation
queries in expectation, where $\kappa:=\beta/\alpha$ is the condition
number of $V$.}}
\end{cor}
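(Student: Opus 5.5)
The plan is to derive Corollary~\ref{cor:well-conditioned} from the first part of Theorem~\ref{thm:INO-warm}. The only work is to reduce the well-conditioned problem to the normalization $B(0,1)\subset\msf L_{g}$ and then read off $d^{2}\Lambda\asymp\kappa d$.

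\textbf{Step 1: covariance before rescaling.} By the Brascamp--Lieb inequality (or the Poincar\'e inequality for $\alpha$-strongly logconcave measures), $\cov\pi\preceq\alpha^{-1}I_{d}$, so $\norm{\cov\pi}_{\op}\le1/\alpha$.

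\textbf{Step 2: a large ball inside the ground set.} Let $x^{*}=\arg\min V$. Since $\nabla V(x^{*})=0$ and $\hess V\preceq\beta I_{d}$, we have $V(x)-V(x^{*})\le\frac{\beta}{2}\norm{x-x^{*}}^{2}$. Hence $B(x^{*},\sqrt{20d/\beta})\subset\msf L_{g}$. Locating $x^{*}$ only approximately (to accuracy $O(\sqrt{d/\beta})$, with zeroth-order queries) is enough: it suffices to find a ball $B(y,c\sqrt{d/\beta})\subset\msf L_{g}$. One way is to take $y$ to be a point with $V(y)-\inf V\le d$, which can be obtained from the warm start and a few samples. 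Then $B(y,\sqrt{d/\beta})\subset\{V-\inf V\le 10d\}$, by smoothness and convexity along segments to $x^{*}$.

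\textbf{Step 3: rescale and apply the theorem.} Apply the affine map $x\mapsto(x-y)/r$ with $r=c\sqrt{d/\beta}$. The pushforward $\tilde\pi$ satisfies $B(0,1)\subset\msf L_{g}(\tilde\pi)$ and $\norm{\cov\tilde\pi}_{\op}\le1/(\alpha r^{2})\asymp\beta/(\alpha d)=\kappa/d$. Both R\'enyi divergence and the warmness ratio $M_{q}$ are invariant under this bijection. An evaluation oracle for the rescaled potential costs one query to $V$.

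\textbf{Step 4: conclude.} Theorem~\ref{thm:INO-warm} with $\Lambda\lesssim\kappa/d$ gives complexity $\Otilde(qd^{2}\cdot\kappa/d\cdot\log^{3}\frac1\veps)=\Otilde(q\kappa d\log^{3}\frac1\veps)$. Its requirement $q\gtrsim\log(d^{2}\Lambda\log\frac1\veps)=\log(\kappa d\log\frac1\veps)$ matches the hypothesis of the corollary. Mapping the output back preserves the R\'enyi bound.

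The main obstacle is Step 2: certifying, without knowing $x^{*}$, a center point with $V(y)-\inf V=O(d)$ at negligible cost. The fallback is this: since the warm start $\pi_{0}$ has $M_{q}\le10$, a draw from $\pi_{0}$ lands in $\{V-\inf V\le 2d\}$ with constant probability, because that set carries most of the $\pi$-mass by concentration of $V$ under logconcave measures. Combining this with the algorithm's own robustness to constant-factor errors in the ball radius should finish the argument.
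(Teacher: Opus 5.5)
Your proposal is correct and follows the paper's argument. You bound $\norm{\cov\pi}\le 1/\alpha$ via Brascamp--Lieb, use $V(x)\le V(x^{*})+\frac{\beta}{2}\norm{x-x^{*}}^{2}$ to place $B(x^{*},\sqrt{20d/\beta})$ inside $\msf L_{g}$, and rescale by $\sqrt{\beta/d}\,(x-x^{*})$ so that $\Lambda\lesssim\kappa/d$; Theorem~\ref{thm:INO-warm} then gives $qd^{2}\cdot\kappa/d=q\kappa d$. The paper centers at the exact minimizer $x^{*}$ without discussing how to locate it, so your approximate-center discussion in Step 2 goes beyond what the paper's proof uses, and it is only sketched (for instance, the event $V(y)-\inf V\le 2d$ cannot be certified without knowing $\inf V$).
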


Previous complexity guarantees \cite{KV25sampling,KV25faster,KV26zeroLC}
needed an additional $d^{2}$ query term. This is because in place
of $\Lambda$ in the corollary above, it was $\Lambda\vee1$; so in
the well-conditioned setting, the complexity guarantee ends up being
$\kappa d+d^{2}$. Where does this costly ``$\vee1$'' term come from?

The proximal sampler~\cite{LST21structured} is a well-studied algorithm
for logconcave sampling: for step size $h>0$, it alternates two steps:
(1) {[}Forward{]} $Y_{k+1}\sim\pi^{Y|X}(y\mid X_{k})=\msf N(X_{k},hI_{d})$,
and (2) {[}Backward{]} $X_{k+1}\sim\pi^{X|Y}(x\mid Y_{k+1})\propto\exp(-V(x)-\frac{1}{2h}\,\norm{x-Y_{k+1}}^{2})$.
It turns out that its convergence rate has a nice connection to the
\emph{Poincar\'e inequality}~\cite{CCSW22improved,KO25strong}.
A probability measure $\pi$ on $\Rd$ is said to satisfy a Poincar\'e
inequality\emph{ }with constant $C$ if for any locally Lipschitz
function $f\in L^{2}(\pi)$,
\begin{equation}
\var_{\pi}f:=\int\Bpar{f-\int f\,\D\pi}^{2}\,\D\pi\leq C\int\norm{\nabla f}^{2}\,\D\pi\,,\tag{\ensuremath{\msf{PI}}}\label{eq:pi}
\end{equation}
and the smallest such $C$ is called the Poincar\'e constant $\cpi(\pi)$.
The convergence rate of the proximal sampler in $\chi^{2}$-divergence
is simply $h^{-1}\cpi(\pi^{X})$.

Given access only to an evaluation oracle for convex $V$, how does
one implement the backward step? The $\ino$ algorithm~\cite{KVZ26INO,KV25sampling}
(a.k.a.\ the proximal sampler $\PS$) adapted the proximal sampler
to this setting via the \emph{exponential lifting}, which is inspired
by a conceptual connection between convex optimization (minimize $V$)
and logconcave sampling (sample from $\pi^{X}\propto e^{-V}$). For
$z:=(x,t)\in\Rd\times\R$, the exponential lifting is the probability
measure $\pi^{Z}$ on $\Rd\times\R$ with density 
\begin{equation}
\pi^{Z}(\D x\,\D t)\propto e^{-dt}\,\ind[V(x)\leq dt]\,\D x\,\D t\,.\tag{exp-lift}\label{eq:exp-lift}
\end{equation}
\begin{figure}
\includegraphics[width=\columnwidth]{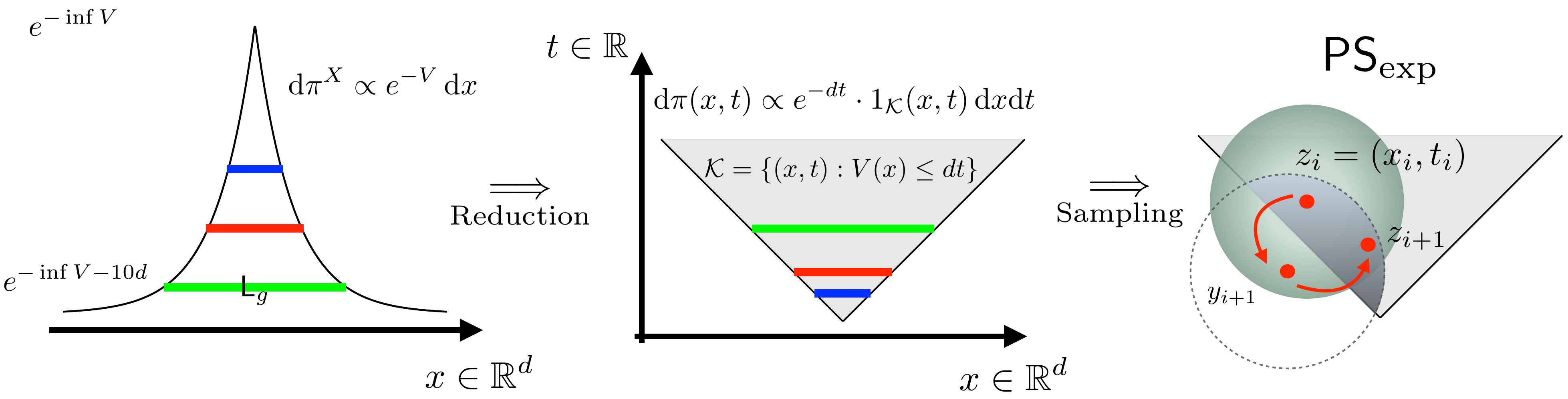} 

\caption{Exponential lifting and the proximal sampler ($\protect\ino$) adapted
from \cite{KV25sampling}.\label{fig:fig1}}
\end{figure}
Since its $X$-marginal is $\pi^{X}$, one may attempt to run $\ino$
to sample $(X,T)$ and then keep $X$ only. This lifting is useful
because its support is convex and its log-density is linear in the
lifted coordinate, thereby making the implementation of the backward
step straightforward.

The analysis now requires a bound on $\cpi(\pi^{Z})$ instead of $\cpi(\pi^{X})$.
Indeed, \cite{KV25sampling} showed that
\[
\norm{\cov\pi^{Z}}\lesssim\norm{\cov\pi^{X}}\vee1\,,
\]
where $\cov\pi$ is the covariance matrix with $\norm{\cov\pi}$ denoting
its operator norm. Applying the logarithmic KLS bound~\cite{Klartag23log}
(i.e., $\cpi(\pi)\lesssim\norm{\cov\pi}\log d$ for any logconcave
$\pi$ on $\Rd$), they established $\cpi(\pi^{Z})\lesssim_{\log}\cpi(\pi^{X})\vee1$. 

On the other hand, $\ino$ for uniform sampling over a convex body
has query complexity of $h^{-1}\norm{\cov\pi^{X}}$. this important
special case has an easy implementation of the backward step of $\ino$
not requiring the exponential lifting. This separation between the
query complexities for the special case and the general setting complicates
the design and analysis of downstream algorithms such as isotropic
rounding and integration.

Given this context, it is natural to wonder whether the ``$\vee1$''
term is really inevitable, or merely an artifact of the analysis?
In this paper, we show that the additive part ``$\vee1$'' can indeed
be removed, by improving the Poincar\'e constant of the lifted distribution.
This unifies the complexity of uniform sampling and general logconcave
sampling.
\begin{lem}
\label{lem:main} $\norm{\cov\pi^{Z}}\leq\norm{\cov\pi^{X}}+\frac{2}{d}$,
and $\frac{1}{d}\lesssim\norm{\cov\pi^{X}}$ under \eqref{eq:intro-ground-set}.
In particular, $\cpi(\pi^{Z})\lesssim\norm{\cov\pi^{X}}\log d\lesssim\cpi(\pi^{X})\log d$.
\end{lem}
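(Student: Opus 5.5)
Let $\sigma^2$ denote $\norm{\cov\pi^X}$. The plan is to bound $\langle v,\cov\pi^Z v\rangle$ for a unit vector $v=(u,s)\in\Rd\times\R$, with $\norm u^2+s^2=1$, by decomposing along the fibers of the lifting.

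\textbf{Structure of the lifting.} Conditionally on $X=x$, the lifted coordinate $T$ has density proportional to $e^{-dt}$ on $[V(x)/d,\infty)$. Hence
\[
T=\frac{V(x)}{d}+\frac{E}{d},\qquad E\sim\Exp(1)\ \text{independent of }X.
\]
It follows that
\[
\cov\pi^Z=\cov\bigl(X,V(X)/d\bigr)+\frac{1}{d^2}e_{d+1}e_{d+1}^{\T}.
\]

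\textbf{Bounding the quadratic form.} With $W:=V(X)/d$ and $X\sim\pi^X$, we have
\[
\langle v,\cov\pi^Z v\rangle=\var\bigl(\langle u,X\rangle+sW\bigr)+\frac{s^2}{d^2}.
\]
We use the standard variance bound $\var_{\pi}V\le d$ for logconcave $\pi\propto e^{-V}$ (the variance of the log-density), which gives $\var W\le 1/d$. Cauchy--Schwarz then yields
\[
\var(\langle u,X\rangle+sW)\le\bigl(\norm u\sigma+|s|d^{-1/2}\bigr)^2\le(\norm u^2+s^2)\bigl(\sigma^2+1/d\bigr)=\sigma^2+1/d.
\]
Adding the term $s^2/d^2\le 1/d$ gives
\[
\norm{\cov\pi^Z}\le\sigma^2+\frac{2}{d}.
\]
A sharper bound may be available by exploiting that $\cov(X,V(X))=\E[\nabla V\text{-type}]$ identities, but the crude bound already matches the stated constant. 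The one nontrivial input is $\var_\pi V\le d$. This is a known result (Nguyen; Wang), which can also be derived quickly: $\log\int e^{-\beta V}$ has convexity controlled by $\beta\mapsto\beta^{-d}$-type behavior. If I had to prove it in place, I would cite that result rather than re-derive it.

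\textbf{The bound $1/d\lesssim\sigma^2$ under \eqref{eq:intro-ground-set}.} This is the step I expect to be the main obstacle, because it must convert a containment statement into a variance lower bound. The idea is that $\msf L_g$ carries most of the mass and contains $B(0,1)$, so the marginal of $\pi^X$ in any direction $\theta$ cannot be too concentrated.

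Concretely, I would argue by one-dimensional logconcavity. Let $\mu_\theta$ be the law of $\langle\theta,X\rangle$, which is logconcave on $\R$. Its density is bounded by roughly $1/\sqrt{\var\mu_\theta}$ (standard: $\sup p\cdot\mathrm{sd}\le 1$). To obtain a lower bound on the spread, I would show that $\pi^X$ assigns mass at least $e^{-O(d)}$-comparable density across the slab, or more directly compare: on $B(0,1)$, $V\le\inf V+10d$. Sublevel sets of convex $V$ give that $\pi^X(\{\langle\theta,x\rangle\ge 1/2\})$ and $\pi^X(\{\langle\theta,x\rangle\le -1/2\})$ are not both tiny.

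The cleaner route is a dilation argument. For $x\in B(0,1)$ and $\lambda\in(0,1)$, convexity applied to the minimizer $x^*$ gives that the set $\{V\le\inf V+ d\}$ contains $x^*+\tfrac{1}{10}(B(0,1)-x^*)$, a ball of radius $1/10$. By the lower bound of \cite[Lemma 5.16]{LV07geometry}-type estimates, a constant fraction of the mass lies in $\{V\le \inf V+O(d)\}$. A logconcave measure whose level set $\{V\le\inf V+d\}$ contains a ball of radius $r$ has directional variance $\gtrsim r^2/d$, since its density is within $e^{d}$ of its maximum on that ball, and the ball has width $r$ in every direction (using the one-dimensional estimate that a logconcave density cannot drop by $e^{-d}$ over length $\ll r/d$ without placing mass spread $\gtrsim r/d$). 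This would give $\sigma^2\gtrsim 1/d^2$ naively. To reach $1/d$, I would instead use the marginal in direction $\theta$, $V_\theta(y)=-\log\int_{\langle\theta,x\rangle=y}e^{-V}$, together with Brunn--Minkowski-type concavity of $t\mapsto$ the volume of slices of the ball: the marginal density stays within a constant factor over an interval of length $\gtrsim 1/\sqrt d$ around the center. I would try this first. Given it, the claim follows, and combining it with the covariance bound and Klartag's logarithmic KLS bound gives
\[
\cpi(\pi^Z)\lesssim\norm{\cov\pi^Z}\log d\lesssim\sigma^2\log d\le\cpi(\pi^X)\log d,
\]
where the last step uses that the Poincaré constant always dominates the covariance operator norm (test with linear functions).
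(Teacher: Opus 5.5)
Your first claim is handled correctly and matches the paper's argument: the representation $T=(V(X)+E)/d$, the varentropy bound, and Cauchy--Schwarz. The paper applies Cauchy--Schwarz with $\var T\le(d+1)/d^2\le 2/d$ directly instead of splitting off the exponential part first, which makes no difference. The closing chain through Klartag's bound and $\cpi(\pi^X)\ge\norm{\cov\pi^X}$ is also fine, once $1/d\lesssim\norm{\cov\pi^X}$ is available.

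The gap is that lower bound. Every route you sketch is a per-direction estimate: the marginal in \emph{any} direction $\theta$ is not too concentrated, directional variance is $\gtrsim r^2/d$, or the marginal density is flat on an interval of length $\gtrsim 1/\sqrt d$. No per-direction bound at scale $1/d$ holds under \eqref{eq:intro-ground-set}.

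Here is a counterexample. Let $\pi^X$ be uniform on $K=\{(y,t)\in\R^{d-1}\times\R:\ -1\le t\le\sqrt2,\ \norm y\le\sqrt2-t\}$. On $B(0,1)$ we have $\norm y+t\le\sqrt2$ by Cauchy--Schwarz, so $B(0,1)\subset K=\msf L_g$. The $t$-marginal is proportional to $(\sqrt2-t)^{d-1}$ on $[-1,\sqrt2]$, so $\var t\asymp 1/d^2$. Near $t=0$ this density changes by a factor $e$ over length $\asymp 1/d$, and $\{t\ge -1/2\}$ carries exponentially small mass. Your side claim that $\{\langle\theta,x\rangle\ge 1/2\}$ and $\{\langle\theta,x\rangle\le -1/2\}$ are not both tiny already fails for the uniform ball.

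So only the \emph{top} eigenvalue is guaranteed to be $\gtrsim 1/d$, and this needs a global, volumetric argument. Write $\Sigma:=\cov\pi^X$ and let $f$ be the density of $\pi^X$. The paper argues as follows:
\begin{itemize}
\item \eqref{eq:intro-ground-set} gives $1\ge e^{-10d}\norm f_\infty\vol(B_2^d)$, so $\norm f_\infty\le (C\sqrt d)^d$.
\item The Gaussian maximum-entropy bound, applied to the isotropic image of $X$, gives $\sqrt{\det\Sigma}\,\norm f_\infty\ge(2\pi e)^{-d/2}$.
\item Together these give $(\det\Sigma)^{1/d}\gtrsim 1/d$, and $\norm\Sigma\ge(\det\Sigma)^{1/d}$.
\end{itemize}
An equally short alternative uses the same density bound. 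It gives $\P(\norm{X-\E X}\le e^{-11})\le\norm f_\infty e^{-11d}\vol(B_2^d)\le e^{-d}$, hence $\tr\Sigma\gtrsim 1$ and $\norm\Sigma\ge\tr\Sigma/d\gtrsim 1/d$.
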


This gives a unified mixing rate of $h^{-1}\norm{\cov\pi^{X}}$ (up
to a logarithmic factor) of $\ino$ for arbitrary logconcave sampling.
\cite{KV25sampling} showed that the backward step can be implemented
efficiently using rejection sampling by setting $h\approx d^{-2}$,
and this gives the guarantees of Theorem~\ref{thm:INO-warm}. The
improvement on $\cpi(\pi^{Z})$ unifies the query complexity of arbitrary
warm-start logconcave sampling, streamlining the guarantees established
in \cite{KV26zeroLC,KV25sampling}.

\paragraph{Technical ideas.}

By the Cauchy--Schwarz inequality, one can check that $\norm{\cov\pi^{Z}}\lesssim\norm{\cov\pi^{X}}+\var_{\pi^{Z}}T$.
Previously, \cite{KV25sampling} showed that $\var_{\pi^{Z}}T=O(1)$.
We improve this to $2/d$. We start with a simple but crucial observation
that if $X\sim\pi^{X}$ and $E\sim\Exp(1)$ are independent, then
$Z=(X,T)$ has the same distribution as $(X,(V(X)+E)/d)$:
\begin{equation}
Z\overset{d}{=}\bpar{X,\frac{V(X)+E}{d}}\,.\label{eq:lift-representation}
\end{equation}
Indeed, conditional on $X=x$, the variable $(V(X)+E)/d$ has density
proportional to $e^{-dt}\,\ind[t\geq\frac{V(x)}{d}]$. Hence, $T=(V(X)+E)/d$
in distribution, and 
\[
\var_{\pi^{Z}}T=\frac{\var_{\pi^{X}}V(X)+1}{d^{2}}=\frac{\var_{\pi^{X}}(\log\pi^{X})+1}{d^{2}}\,.
\]

This observation prompts a crisp connection to \emph{varentropy}.
The varentropy is the variance of the information content, namely
$-\log\pi(X)$ of a random point $X$ with density $\pi$. Bobkov
and Madiman proved an $O(d)$-bound for this quantity for logconcave
densities \cite{BM11concentration}. The sharp constant was later
identified independently in the theses of Nguyen and Wang \cite{Nguyen13thesis,Wang14thesis}:
\begin{equation}
\var_{\pi^{X}}V(X)\leq d\,.\tag{varentropy}\label{eq:varentropy}
\end{equation}
We refer the readers to the work of Fradelizi, Madiman, and Wang \cite[Theorem~2.3]{FMW16Optimal}
for a proof. This result plays an important role in establishing $d$-self-concordance
of the entropic barrier in the work of Chewi \cite{Chewi23EntropicBarrier}.
In our setting, $-\log\pi(X)=V(X)+O(1)$. Therefore, $\norm{\cov\pi^{Z}}\leq\norm{\cov\pi^{X}}+\frac{2}{d}$.
Moreover, $1/d$ is unavoidable under the ground-set normalization
\eqref{eq:intro-ground-set}. In fact, one can show that $1/d\lesssim\norm{\cov\pi^{X}}(\leq\cpi(\pi^{X}))$
in this setting. 

\paragraph{Preliminaries.}

For probability measures $\mu,\nu$ with $\mu\ll\nu$ and $q\in(1,\infty)$,
the \emph{$q$-R\'enyi divergence} is 
\begin{equation}
\eu R_{q}(\mu\mmid\nu):=\frac{1}{q-1}\,\log\int\bpar{\frac{\D\mu}{\D\nu}}^{q}\,\D\nu=\frac{q}{q-1}\,\log\,\Bnorm{\frac{\D\mu}{\D\nu}}_{L^{q}(\nu)}\,.\label{eq:renyi-definition}
\end{equation}
If $\mu$ is not absolutely continuous with respect to $\nu$, we
set $\eu R_{q}(\mu\mmid\nu)=\infty$. The R\'enyi-infinity divergence
is $\eu R_{\infty}(\mu\mmid\nu):=\log\esssup_{\nu}\frac{\D\mu}{\D\nu}$. 

We always use a lower semicontinuous convex representative of $V$.
Replacing a convex function by its lower semicontinuous closure does
not change $e^{-V}$ outside a Lebesgue-null set. The representative
matters only for pointwise statements such as the ground-set inclusion
\eqref{eq:intro-ground-set}; throughout the paper that inclusion
is an assumption on the chosen representative. In convex-analysis
terminology, this is the closed convex representative of $V$ \cite[\S7 and \S10]{Rockafellar70Convex}.

\section{Improved covariance of the lifted distribution via varentropy\label{sec:covariance}}

In this section, we prove Lemma~\ref{lem:main}. Let $\pi^{X}\propto e^{-V}$
be a full-dimensional logconcave probability measure on $\Rd$, and
$\pi^{Z}$ the lifted distribution \eqref{eq:exp-lift}. Note that
one can generate an $O(1)$-warm start for $\pi^{Z}$ from an $M$-warm
start for $\pi^{X}$ (see the proof of \cite[Theorem 2.15]{KV25sampling}).

\paragraph{First claim.}

We have already observed that $\var_{\pi^{Z}}T=\frac{\var_{\pi^{X}}(\log\pi^{X})+1}{d^{2}}$.
Using \eqref{eq:varentropy}, we establish $\var T\leq2/d$. It remains
to pass from this to the full covariance matrix. For every $(u,s)\in\Rd\times\R$
with $\norm u^{2}+s^{2}=1$, by the Cauchy--Schwarz inequality
\begin{align*}
\var_{\pi^{Z}}(u^{\T}X+sT) & =u^{\T}\cov\pi^{X}u+2\cov_{\pi^{Z}}(u^{\T}X,sT)+s^{2}\var_{\pi^{Z}}T\le\bpar{\norm{\cov\pi^{X}}^{1/2}\,\norm u+\sqrt{\var_{\pi^{Z}}T}\,\abs s}^{2}\\
 & \le(\norm{\cov\pi^{X}}+\var_{\pi^{Z}}T)(\norm u^{2}+s^{2})\leq\norm{\cov\pi^{X}}+\frac{2}{d}\,.
\end{align*}
Taking the supremum over unit $(u,s)$ proves the first claim.

\paragraph{Second claim.}

The next lemma shows that $O(1/d)$ is absorbed by the intrinsic covariance
scale under our ground-set normalization \eqref{eq:intro-ground-set}.
\begin{lem}
\label{lem:ground-set-lower-scale} Assume the lower-semicontinuous
convex representative of $V$ satisfies \eqref{eq:intro-ground-set}.
Then 
\begin{equation}
\tr(\cov\pi^{X})\gtrsim1\,,\qquad\cpi(\pi^{X})\geq\norm{\cov\pi^{X}}\gtrsim\frac{1}{d}\,.\label{eq:ground-set-scale-main}
\end{equation}
\end{lem}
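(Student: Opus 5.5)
Since $\cpi(\pi^{X})\geq\norm{\cov\pi^{X}}$ holds for every measure (test the Poincar\'e inequality with linear functions $f(x)=u^{\T}x$), and $\norm{\cov\pi^{X}}\geq\frac{1}{d}\tr(\cov\pi^{X})$, everything reduces to showing $\tr(\cov\pi^{X})\gtrsim1$. The plan is to show that $\pi^{X}$ puts a constant fraction of its mass on the ball $B(0,1)\subset\msf L_{g}$, with a density there that does not vary by more than an $e^{O(d)}$ factor. We then compare $\E\norm{X-\mu}^{2}$ with the second moment of the uniform measure on a ball, which is of order $1$.

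\textbf{Step 1 (density comparison on the ball).} Write $m=\inf V$, and note $V\leq m+10d$ on $B(0,1)$. Convexity gives more: for $x\in B(0,1/2)$ and any direction, $V$ along the segment to the boundary of $B(0,1)$ is bounded by $m+10d$. Hence on $B(0,1)$ the density satisfies $e^{-V}\geq e^{-m-10d}$, while everywhere $e^{-V}\leq e^{-m}$. This alone is too lossy ($e^{10d}$), so I would not compare masses directly. Instead I would use a more robust argument that only needs lower bounds on the mass of small sets relative to $B(0,1)$.

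\textbf{Step 2 (anti-concentration).} Fix the mean $\mu$ and let $r>0$ be small. I would bound $\pi^{X}(B(\mu,r))$ from above. Since $\tr\cov\pi^{X}=\E\norm{X-\mu}^{2}$, it suffices to show $\pi^{X}(B(\mu,r))\leq1/2$ for some absolute $r$, giving $\tr\gtrsim r^{2}/2$. For this I would invoke the fact that $\msf L_{g}$ carries most of the mass (the cited \cite[Lemma 5.16]{LV07geometry}) together with the pointwise bound $\sup e^{-V}\leq e^{-m}$. Then $\pi^{X}(B(\mu,r))\leq e^{-m}\vol B(\mu,r)/\int e^{-V}$. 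For the normalizer, $\int e^{-V}\geq\int_{\msf L_{g}}e^{-V}$, and by logconcavity the level set $\{V\leq m+s\}$ has volume growing like $s^{d}$ roughly. The cleanest route: the set $\{V\leq m+d\}$ contains the homothetic copy $\frac{1}{10}(\msf L_{g}-x^{*})+x^{*}$ (convexity, with $x^{*}$ a minimizer), so its volume is at least $10^{-d}\vol B(0,1)$. Thus $\int e^{-V}\geq e^{-m-d}10^{-d}\vol B(0,1)$, and so $\pi^{X}(B(\mu,r))\leq(10e\,r)^{d}$, which is $\leq1/2$ for $r=1/(20e)$.

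\textbf{Main obstacle.} The delicate point is the homothety step: the minimizer $x^{*}$ may not be attained or may lie anywhere in $\msf L_{g}$. I would handle this by taking a near-minimizer $x^{*}$ with $V(x^{*})\leq m+\delta$. Convexity then gives $V(x^{*}+\tfrac{1}{10}(y-x^{*}))\leq\tfrac{9}{10}(m+\delta)+\tfrac{1}{10}(m+10d)\leq m+d+\delta$ for $y\in\msf L_{g}\supset B(0,1)$. This image of $B(0,1)$ is a ball of radius $1/10$, and only its volume matters, not its location, so the argument is unaffected. Letting $\delta\to0$ completes the bound. Ground-set inclusion is used only pointwise here, which is consistent with the closed-representative convention.
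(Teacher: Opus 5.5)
Your proof is correct, and it reaches the bound by a different, more elementary route than the paper. Both arguments start from the same fact: since $V\ge m:=\inf V$ everywhere and $V\le m+10d$ on $B(0,1)$, the density $f\propto e^{-V}$ satisfies $\norm{f}_{\infty}\le e^{10d}/\vol(B_{2}^{d})$.

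The paper then whitens, $Y=\Sigma_{X}^{-1/2}(X-\mu)$ with $\Sigma_{X}:=\cov\pi^{X}$. It uses the Gaussian maximum-entropy bound $-\log\norm{g}_{\infty}\le h(Y)\le\frac{d}{2}\log(2\pi e)$ to get the determinant bound $(\det\Sigma_{X})^{1/d}\gtrsim1/d$, and finishes with AM--GM.

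You instead use the sup-norm bound directly as a small-ball estimate around the mean, $\pi^{X}(B(\mu,r))\le\norm{f}_{\infty}\,r^{d}\vol(B_{2}^{d})$. You then conclude $\tr\Sigma_{X}=\E\norm{X-\mu}^{2}\ge r^{2}/2$. Your route needs no entropy, but it only gives the trace bound (and hence the operator-norm bound). The paper's route gives the stronger volumetric statement $(\det\Sigma_{X})^{1/d}\gtrsim1/d$. You could recover that too with the same volume comparison applied to $Y$: Markov's inequality gives $\mathbb{P}(\norm{Y}\le2\sqrt{d})\ge3/4$, hence $\sqrt{\det\Sigma_{X}}\,\norm{f}_{\infty}\,\vol(B(0,2\sqrt{d}))\ge3/4$. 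For the lemma as stated, either route suffices.

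One correction: contrary to your Step 1, the $e^{10d}$ factor is not too lossy. Using $\int e^{-V}\ge e^{-m-10d}\vol(B_{2}^{d})$ directly gives $\pi^{X}(B(\mu,r))\le(e^{10}r)^{d}$, so $r=e^{-10}/2$ already works; the paper uses exactly this factor. The following pieces are therefore unnecessary:
\begin{itemize}
\item the homothety step, together with the attainment issue you flag as the main obstacle;
\item the remark about segments in Step 1;
\item the appeal to the mass of $\msf L_{g}$.
\end{itemize}
Your homothety argument is valid; it just improves the absolute constant.
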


\begin{proof}
Let $f=f_{X}=e^{-V}/Z_{X}$. Since the logconcave density $f$ has
a mode, it should attain a maximum. Indeed, any level set of $V$
should be bounded (otherwise, the density is not integrable). By \cite[Theorem 1.9]{RW98variational},
$m:=\inf V$ is attained.

Let 
\[
M:=\norm f_{\infty}=\frac{e^{-m}}{Z_{X}}\,.
\]
The ground-set assumption gives $f(x)\geq e^{-10d}M$ on $B(0,1)$.
Thus 
\[
1=\int f(x)\,\D x\geq e^{-10d}M\,\vol B_{2}^{d}\,,
\]
and hence 
\begin{equation}
M\leq\frac{e^{10d}}{\vol(B_{2}^{d})}\leq(C\sqrt{d})^{d}\,.\label{eq:max-density-upper-main}
\end{equation}

Let $\mu=\E X$ and $\Sigma_{X}=\cov\pi^{X}$. Since $\pi^{X}$ is
full-dimensional and logconcave, $\Sigma_{X}$ is positive definite.
The isotropic random vector $Y=\Sigma_{X}^{-1/2}\,(X-\mu)$ has density
$g(y)\,\D y$, where 
\[
g(y)=\sqrt{\det\Sigma_{X}}\,f(\mu+\Sigma_{X}^{1/2}y)\,.
\]
Thus $\norm g_{\infty}=\sqrt{\det\Sigma_{X}}\,M$. The Gaussian maximum-entropy
bound gives 
\[
h(Y):=\E[-\log g(Y)]\leq\frac{d}{2}\log(2\pi e)\,,
\]
while $h(Y)=\E[-\log g(Y)]\geq-\log\norm g_{\infty}$. Therefore $\norm g_{\infty}\geq(2\pi e)^{-d/2}=c^{d}$,
and 
\[
\sqrt{\det\Sigma_{X}}\,M\geq c^{d}\,.
\]
Combining this with \eqref{eq:max-density-upper-main} yields 
\[
(\det\Sigma_{X})^{1/d}\gtrsim\frac{1}{d}\,.
\]
The arithmetic--geometric mean inequality gives $\tr\Sigma_{X}\geq d\,(\det\Sigma_{X})^{1/d}\gtrsim1$,
and also $\norm{\Sigma_{X}}\geq(\det\Sigma_{X})^{1/d}\gtrsim1/d$.
Finally, $\cpi(\pi^{X})\geq\norm{\cov\pi^{X}}$ by the Poincar\'e
inequality on linear functions. 
\end{proof}

\section{Unified complexity for logconcave sampling\label{sec:consequences}}

The improved covariance estimate has two consequences for sampling
from a warm start (see Theorem~\ref{thm:INO-warm}). For finite R\'enyi
guarantees, the relevant functional inequality is the Poincar\'e
inequality, and the covariance improvement directly affects the query
complexity. For an $\eu R_{\infty}$-guarantee, the natural mixing
statement is based on \emph{log-Sobolev inequality} (LSI): A probability
measure $\pi$ on $\Rd$ is said to satisfy a logarithmic Sobolev
inequality with constant $C$ if for any locally Lipschitz function
$f\in L^{2}(\pi)$,
\begin{equation}
\ent_{\pi}(f^{2}):=\int f^{2}\log f^{2}\,\D\pi-\int f^{2}\,\D\pi\cdot\log\int f^{2}\,\D\pi\leq2C\int\norm{\nabla f}^{2}\,\D\pi\,,\tag{\ensuremath{\msf{LSI}}}\label{eq:lsi}
\end{equation}
and the smallest such $C$ is referred to as the log-Sobolev constant
$\clsi(\pi)$. Starting from an $O(1)$-warm initial distribution,
the proximal sampler (or INO) mixes at a rate governed by the LSI
constant of the lifted target. 

\paragraph{$\protect\eu R_{q}$-guarantee from $\protect\eu R_{q}$-warmness
(PI).}

For a finite R\'enyi guarantee, the cleanest guarantee is the balanced
warm-start theorem of \cite{KV26zeroLC}, rather than the older $\eu R_{\infty}\to\eu R_{q}$
warm-start theorem in \cite{KV25sampling}. That theorem runs the
same exponential-lift sampler with proper restart, but requires a
weaker $L^{q}$-warm start for $\eu R_{q}$-guarantees.

Theorem~1.7 of \cite{KV26zeroLC} provides query complexity $\Otilde(qd^{2}(\Lambda\vee1)\log^{3}\frac{1}{\veps})$
from an $O(1)$ warm start in $L^{q}(\pi^{X})$. In that proof, the
term $\Lambda\vee1$ enters through the lifted Poincar\'e constant
$\cpi(\pi^{Z})\lesssim_{\log}\Lambda\vee1$ of the exponential lift.
Lemma~\ref{lem:main} improves it to $\cpi(\pi^{Z})\lesssim\Lambda\log d$.
The restart, rejection-sampling, and $\eu R_{q}$ contraction arguments
of \cite{KV26zeroLC} remain the same. Substituting this Poincar\'e
constant proves the first part of Theorem~\ref{thm:INO-warm}.

\paragraph{$\protect\eu R_{\infty}$-guarantee from $\protect\eu R_{\infty}$-warmness
(LSI).}

The finite-$q$ mixing result above is driven by a Poincar\'e constant,
while a pointwise $\eu R_{\infty}$ guarantee is stronger and is better
viewed through LSI. To ensure finite $\clsi$, we should truncate
the lifted convex set. When $R^{2}:=\E_{\pi^{X}}[\norm{\cdot}^{2}]$,
\cite{KV25sampling} showed the following result:
\begin{prop}
\label{prop:truncated-lsi} Under \eqref{eq:intro-ground-set}, given
$\veps\in(0,1)$, there exists a truncated convex body $K_{\tr}^{Z}\subset\Rd\times\R$
of diameter $R_{\veps}:=O((R\vee1)\,\log\frac{1}{\veps})$ such that
$\pi^{Z}(K_{\tr}^{Z})\geq1-\veps/2$ and $\pi_{\mathrm{tr}}^{Z}\propto\pi^{Z}\cdot\ind_{K_{\mathrm{tr}}^{Z}}$
satisfies 
\[
\clsi(\pi_{\mathrm{tr}}^{Z})\lesssim R_{\veps}\,\norm{\cov\pi^{X}}^{1/2}\log^{1/2}d\,.
\]
\end{prop}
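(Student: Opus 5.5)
Our plan is to construct the truncated body explicitly and then bound its log-Sobolev constant via the standard estimate for logconcave measures with bounded support. That estimate reads $\clsi(\mu)\lesssim D\,\cpi(\mu)^{1/2}$ for a logconcave $\mu$ supported on a convex set of diameter $D$; it is the Frieze--Kannan / Lee--Vempala type bound, obtained by combining a Cheeger/Poincar\'e bound with bounded diameter.

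\textbf{Truncation.} We intersect the support of $\pi^{Z}$ with two constraints: $\norm{x}\le c_1(R\vee1)\log\frac{1}{\veps}$, and $t\le t_0:=\frac{\inf V}{d}+c_2\log\frac{1}{\veps}$. The lower boundary $dt\ge V(x)\ge\inf V$ is already built in. The resulting set $K_{\mathrm{tr}}^{Z}$ is convex as an intersection of convex sets, and its diameter is $O((R\vee1)\log\frac{1}{\veps})$ provided the $t$-range is $O(\log\frac{1}{\veps})$.

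To control the mass removed, we use the representation \eqref{eq:lift-representation}, $T=(V(X)+E)/d$, and bound the two discarded pieces separately.
\begin{itemize}
\item For the $x$-constraint, we combine Markov's inequality with exponential concentration of $\norm{X}$ for logconcave $X$ (Borell's lemma): $\P(\norm X\ge cR\log\frac1\veps)\le\veps/4$.
\item For the $t$-constraint, we use $\P(E\ge d\log\frac{8}{\veps})\le\veps/8$. For the other term we use concentration of $V(X)-\inf V$: it is $\le d+O(\sqrt d)\cdot$(subexponential tail), which is the varentropy/Bobkov--Madiman concentration. Hence $\P(V(X)-\inf V\ge Cd\log\frac1\veps)\le\veps/8$.
\end{itemize}
So the range of $t$ is $O(\log\frac1\veps)$. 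A union bound then gives $\pi^{Z}(K_{\mathrm{tr}}^{Z})\ge1-\veps/2$.

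\textbf{LSI bound.} The measure $\pi^{Z}_{\mathrm{tr}}$ is logconcave, being the restriction of a logconcave measure to a convex set. Its covariance is at most a constant times that of $\pi^{Z}$, because truncation to a set of mass $\ge1/2$ changes second moments by at most a constant for logconcave measures; more carefully, a convex restriction of mass $\ge 1/2$ has covariance $\preceq O(1)\cov\pi^{Z}$ (by comparing with the untruncated measure). Lemma~\ref{lem:main} then gives $\norm{\cov\pi^{Z}_{\mathrm{tr}}}\lesssim\norm{\cov\pi^{X}}$. The KLS log bound yields $\cpi(\pi^{Z}_{\mathrm{tr}})\lesssim\norm{\cov\pi^{X}}\log d$. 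Plugging into $\clsi\lesssim D\sqrt{\cpi}$ with $D=R_\veps$ gives the claim.

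\textbf{Main obstacle.} The delicate point is the covariance comparison after truncation. We would handle it directionally: for each unit direction, $\var_{\mu|_K}(\ell)\le\E_{\mu|_K}(\ell-\E_\mu\ell)^2\le2\E_\mu(\ell-\E_\mu\ell)^2$, since the density ratio is at most $1/\mu(K)\le2$.
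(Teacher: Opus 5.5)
Your proposal is correct, and its LSI part is exactly the paper's argument: the directional bound $\var_{\pi^{Z}_{\mathrm{tr}}}(w^{\T}Z)\le 2\var_{\pi^{Z}}(w^{\T}Z)$ (density ratio at most $2$), then Lemma~\ref{lem:main}, the logarithmic KLS bound, and the bounded-support estimate $\clsi\lesssim R_{\veps}\,\cpi^{1/2}$, which the paper cites as \cite[Theorem~3.1]{KV25faster}. The only difference is that you build $K^{Z}_{\tr}$ explicitly from $T=(V(X)+E)/d$ using Borell and Bobkov--Madiman tail bounds, whereas the paper imports the truncation from \cite{KV25sampling}; your construction is sound, provided $\log\frac{1}{\veps}$ is read as $\log\frac{1}{\veps}\vee1$, as the statement implicitly does.
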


\begin{proof}
For unit vector $w\in\Rd\times\R$, we have $\var_{\pi_{\mathrm{tr}}^{Z}}(w^{\T}Z)\leq\E_{\pi_{\mathrm{tr}}^{Z}}[(w^{\T}Z-\E_{\pi^{Z}}w^{\T}Z)^{2}]\leq2\var_{\pi^{Z}}(w^{\T}Z)$.
Thus, 
\[
\norm{\cov\pi_{\mathrm{tr}}^{Z}}\leq2\norm{\cov\pi^{Z}}\underset{\text{Lemma }\ref{lem:main}}{\lesssim}2\norm{\cov\pi^{X}}\,.
\]
Using the LSI bound for logconcave measure with bounded support \cite[Theorem~3.1]{KV25faster},
\[
\clsi(\pi_{\mathrm{tr}}^{Z})\lesssim R_{\veps}\cpi(\pi_{\mathrm{tr}}^{Z})^{1/2}\lesssim R_{\veps}\,\norm{\cov\pi_{\mathrm{tr}}^{Z}}^{1/2}\log^{1/2}d\lesssim R_{\veps}\,\norm{\cov\pi^{X}}^{1/2}\log^{1/2}d\,,
\]
which completes the proof.
\end{proof}
We are now ready to prove the second part of Theorem~\ref{thm:INO-warm}.
Let $\K:=K_{\tr}^{Z}$. Given the $M_{\infty}$-warm $\pi_{0}^{Z}$,
one can draw a sample from $\pi_{0}^{Z}|_{\K}\propto\pi_{0}^{Z}\cdot\ind_{\K}$
by rejection sampling: draw $Z^{*}\sim\pi_{0}^{Z}$ until $Z^{*}\in\K$.
Since $\pi^{Z}(\K)\geq1-\veps/2$, one can readily check that
\[
\frac{\pi_{0}^{Z}|_{\K}}{\pi^{Z}|_{\K}}\leq M_{\infty}\,\frac{1}{1-M_{\infty}\veps/2}\,.
\]
Since $M_{\infty}\leq10$, for $\veps<0.01$, the warmness of $\pi_{0}^{Z}|_{\K}$
with respect to $\pi^{Z}|_{\K}$ is also $O(1)$. Next, \cite[Lemma~3.6]{KV25sampling}
showed that the convergence rate of $\ino$ for $\eu R_{\infty}$-guarantee
is $h^{-1}\clsi(\pi_{\mathrm{tr}}^{Z})\log\frac{1}{\veps}$ from an
$O(1)$-warm start in $\eu R_{\infty}$. Using Proposition~\ref{prop:truncated-lsi}
and $R^{2}\geq\tr(\cov\pi^{X})\geq1$, from an $O(1)$-warm start
in $\eu R_{\infty}$, the query complexity for $\eu R_{\infty}$-guarantee
is simply
\[
d^{2}\clsi(\pi_{\mathrm{tr}}^{Z})\polylog\frac{1}{\veps}\lesssim d^{2}R_{\veps}\,\norm{\cov\pi^{X}}^{1/2}\log^{1/2}d\polylog\frac{1}{\veps}\lesssim d^{2}R\,\norm{\cov\pi^{X}}^{1/2}\log^{1/2}d\polylog\frac{1}{\veps}\,.
\]

\paragraph{Well-conditioned setting.}

We now prove Corollary~\ref{cor:well-conditioned}. Let $\D\pi\propto\exp(-V)\,\D x$
be well-conditioned, and $x^{*}:=\arg\min_{x\in\Rd}V(x)$. Since $V(x)\leq V(x^{*})+\frac{\beta}{2}\,\norm{x-x^{*}}^{2}$,
we clearly have that 
\[
B\bpar{x^{*},\sqrt{\nicefrac{20d}{\beta}}}=\bbrace{x\in\Rd:\frac{\beta}{2}\,\norm{x-x^{*}}^{2}\leq10d}\subseteq\msf L_{g}\,,
\]
which means the ground set includes a ball of radius $\Theta(\sqrt{d/\beta})$.
Also, since $\norm{\cov\pi}\leq\cpi(\pi)\leq\alpha^{-1}$ (e.g., by
the Brascamp--Lieb inequality), the scaling $S(x):=\sqrt{\beta/d}(x-x^{*})$
(equivalently, taking $h\approx(\beta d)^{-1}$) ensures $\cpi(S_{\#}\pi)\lesssim\kappa/d$
and satisfies the ground set condition \eqref{eq:intro-ground-set}.
Thus, from an $O(1)$-warm start, the query complexity becomes $qd^{2}\frac{\kappa}{d}=q\kappa d$
as claimed.
\begin{acknowledgement*}
This work was supported in part by NSF award AF-2504994 and a Simons
Investigator award. We acknowledge the use of ChatGPT for a pointer
to relevant literature, in particular \cite[Theorem~2.3]{FMW16Optimal},
and for help with proofreading an earlier draft.
\end{acknowledgement*}
\bibliographystyle{alpha}
\bibliography{main}

@book{RW98variational,
	author = {Rockafellar, R. Tyrrell and Wets, Roger J.-B.},
	publisher = {Springer-Verlag, Berlin},
	series = {Grundlehren der mathematischen Wissenschaften},
	title = {Variational analysis},
	volume = {317},
	year = {1998}}

@article{LV07geometry,
	author = {Lov\'{a}sz, L\'{a}szl\'{o} and Vempala, Santosh S.},
	journal = {Random Structures \& Algorithms},
	number = {3},
	pages = {307--358},
	title = {The geometry of logconcave functions and sampling algorithms},
	volume = {30},
	year = {2007}}

@inbook{Chewi23EntropicBarrier,
	address = {Cham},
	author = {Chewi, Sinho},
	booktitle = {Geometric Aspects of Functional Analysis: Israel Seminar (GAFA) 2020-2022},
	editor = {Eldan, Ronen and Klartag, Bo'az and Litvak, Alexander and Milman, Emanuel},
	pages = {209--222},
	publisher = {Springer International Publishing},
	title = {The entropic barrier is $n$-self-concordant},
	year = {2023}}

@article{ALP+24explicit,
	author = {Andrieu, Christophe and Lee, Anthony and Power, Sam and Wang, Andi Q.},
	journal = {The Annals of Applied Probability},
	number = {4},
	pages = {4022--4071},
	title = {Explicit convergence bounds for {M}etropolis {M}arkov chains: isoperimetry, spectral gaps and profiles},
	volume = {34},
	year = {2024}}

@article{ALZ24entropy,
	author = {Filippo Ascolani and Hugo Lavenant and Giacomo Zanella},
	journal = {arXiv preprint arXiv:2410.00858},
	title = {Entropy contraction of the {G}ibbs sampler under log-concavity},
	year = {2024}}

@inproceedings{LST21structured,
	author = {Lee, Yin Tat and Shen, Ruoqi and Tian, Kevin},
	booktitle = {Conference on Learning Theory},
	pages = {2993--3050},
	publisher = {PMLR},
	title = {Structured logconcave sampling with a restricted {G}aussian oracle},
	volume = {134},
	year = {2021}}

@article{KO25strong,
	author = {Klartag, Bo'az and Ordentlich, Or},
	journal = {IEEE Transactions on Information Theory},
	number = {5},
	pages = {3317--3333},
	title = {The strong data processing inequality under the heat flow},
	volume = {71},
	year = {2025}}

@inproceedings{CCSW22improved,
	author = {Chen, Yongxin and Chewi, Sinho and Salim, Adil and Wibisono, Andre},
	booktitle = {Conference on Learning Theory},
	pages = {2984--3014},
	publisher = {PMLR},
	title = {Improved analysis for a proximal algorithm for sampling},
	volume = {178},
	year = {2022}}

@article{KVZ26INO,
	author = {Kook, Yunbum and Vempala, Santosh S. and Zhang, Matthew S.},
	journal = {Random Structures \& Algorithms},
	number = {3},
	pages = {e70061},
	title = {In-and-{O}ut: algorithmic diffusion for sampling convex bodies},
	volume = {68},
	year = {2026}}

@article{BM11concentration,
	author = {Bobkov, Sergey and Madiman, Mokshay},
	journal = {The Annals of Probability},
	number = {4},
	pages = {1528--1543},
	title = {Concentration of the information in data with log-concave distributions},
	volume = {39},
	year = {2011}}

@inbook{FMW16Optimal,
	author = {Fradelizi, Matthieu and Madiman, Mokshay and Wang, Liyao},
	pages = {45--60},
	publisher = {Springer International Publishing},
	title = {Optimal Concentration of Information Content for Log-Concave Densities},
	year = {2016}}

@article{Klartag23log,
	author = {Klartag, Bo'az},
	journal = {Ars Inveniendi Analytica},
	number = {4},
	pages = {1--17},
	title = {Logarithmic bounds for isoperimetry and slices of convex sets},
	year = {2023}}

@phdthesis{Nguyen13thesis,
	author = {Nguyen, Van Hoang},
	month = {October},
	school = {Universit{\'e} Pierre et Marie Curie (Paris VI)},
	title = {In{\'e}galit{\'e}s fonctionnelles et convexit{\'e}},
	year = {2013}}

@phdthesis{Wang14thesis,
	author = {Wang, Liyao},
	month = {May},
	school = {Yale University},
	title = {Heat capacity bound, energy fluctuations and convexity},
	year = {2014}}

@book{Rockafellar70Convex,
	author = {Rockafellar, R. Tyrrell},
	publisher = {Princeton University Press, Princeton, NJ},
	series = {Princeton Mathematical Series, No. 28},
	title = {Convex analysis},
	year = {1970}}

@inproceedings{KV25faster,
	author = {Kook, Yunbum and Vempala, Santosh S.},
	booktitle = {Symposium on Foundations of Computer Science},
	pages = {997-1006},
	publisher = {IEEE},
	title = {Faster logconcave sampling from a cold start in high dimension},
	year = {2025}}

@inproceedings{KV25sampling,
	author = {Kook, Yunbum and Vempala, Santosh S.},
	booktitle = {{S}ymposium on {T}heory of {C}omputing},
	pages = {924--932},
	publisher = {ACM},
	title = {Sampling and integration of logconcave functions by algorithmic diffusion},
	year = {2025}}

@article{KV26zeroLC,
	author = {Kook, Yunbum and Vempala, Santosh S.},
	journal = {arXiv preprint arXiv:2507.18021},
	title = {Zeroth-order logconcave sampling},
	year = {2025}}

\end{document}